\title{Predictive Monitoring of Black-Box Dynamical Systems}
\author{%
	\Name{Thomas A. Henzinger}\textsuperscript{1} \Email{tah@ist.ac.at} \\
	\Name{Fabian Kresse}\textsuperscript{1} \Email{fabian.kresse@ist.ac.at} \\
		\Name{Kaushik Mallik}\textsuperscript{2}  \Email{kmallik314@gmail.com} \\
	\Name{Emily Yu}\textsuperscript{1} \Email{emily.yu@ist.ac.at} \\
	\Name{\DJ or\dj e \v{Z}ikeli\'c}\textsuperscript{3} \Email{dzikelic@smu.edu.sg}\\
	\addr \textsuperscript{1}Institute of Science and Technology Austria, Klosterneuburg, Austria\\
		\addr \textsuperscript{2} IMDEA Software Institute, Madrid, Spain \\
			\addr \textsuperscript{3} Singapore Management University, Singapore, Singapore \\
}
\newtheorem{assumption}{Assumption}
\newcommand{\tup}[1]{(#1)}
\newcommand{\Sys}{\xi}
\newcommand{\est}[2]{\overline{#1_{#2}}}
\newcommand{\grad}[3]{\nabla^{#1}_{#2}\, #3}
\newcommand{\spec}{\varphi}
\newcommand{\val}{\spec}
\begin{document}

\maketitle

\begin{abstract}
We study the problem of predictive runtime monitoring of black-box dynamical systems with quantitative safety properties. The black-box setting stipulates that the exact semantics of the dynamical system and the controller are unknown, and that we are only able to observe the state of the controlled (aka, closed-loop) system at finitely many time points. We present a novel framework for predicting future states of the system based on the states observed in the past. The numbers of past states and of predicted future states are parameters provided by the user. Our method is based on a combination of Taylor's expansion and the backward difference operator for numerical differentiation. We also derive an upper bound on the prediction error under the assumption that the system dynamics and the controller are smooth. The predicted states are then used to predict safety violations ahead in time. Our experiments demonstrate practical applicability of our method for complex black-box systems, showing that it is computationally lightweight and yet significantly more accurate than the state-of-the-art predictive safety monitoring techniques. 
\end{abstract}

\begin{keywords}
  Runtime monitoring, predictive safety monitoring, control systems, black-box control
\end{keywords}


\section{Introduction}\label{sec:intro}

A majority of autonomous systems nowadays depend on advanced artificial intelligence (AI) technologies.
For instance, in self-driving cars, perception modules are almost synonymous to machine-learned computer vision software~\citep{JanaiGBG20} and controllers are routinely designed using deep reinforcement learning algorithms~\citep{LillicrapHPHETS15}.
Although revolutionary, these AI technologies are hard to analyze and pose serious risk with respect to the safe and correct behavior of the underlying systems~\citep{AmodeiOSCSM16}.


Towards the safe and trustworthy deployment of AI-powered systems, we study the problem of {\em predictive runtime monitoring} of continuous-time dynamical systems possibly operated by a learned controller. We consider the {\em black-box setting}, in which the exact semantics of the system dynamics and the controller are unknown. Rather, one is only able to observe the state of the system at finitely many sampling instances. Our goal is to design a runtime monitoring algorithm which, at each sampling point, takes the past states into account, and \emph{predicts} the future states and resulting future safety status of the system within a given time horizon. Unlike traditional runtime monitoring concerning fulfillment of safety only in the \emph{past}~\citep{bartocci2018lectures}, our predictive monitors raise safety warnings \emph{before} they actually take place, so that the system can be intervened and steered out of danger in time, e.g., by using a fail-safe backup controller as in shielding~\citep{AlshiekhBEKNT18}.


\begin{wrapfigure}{r}{0.4\linewidth}
	\centering
	\includegraphics[trim={1cm 1cm 1cm 2.5cm},clip,width=\linewidth]{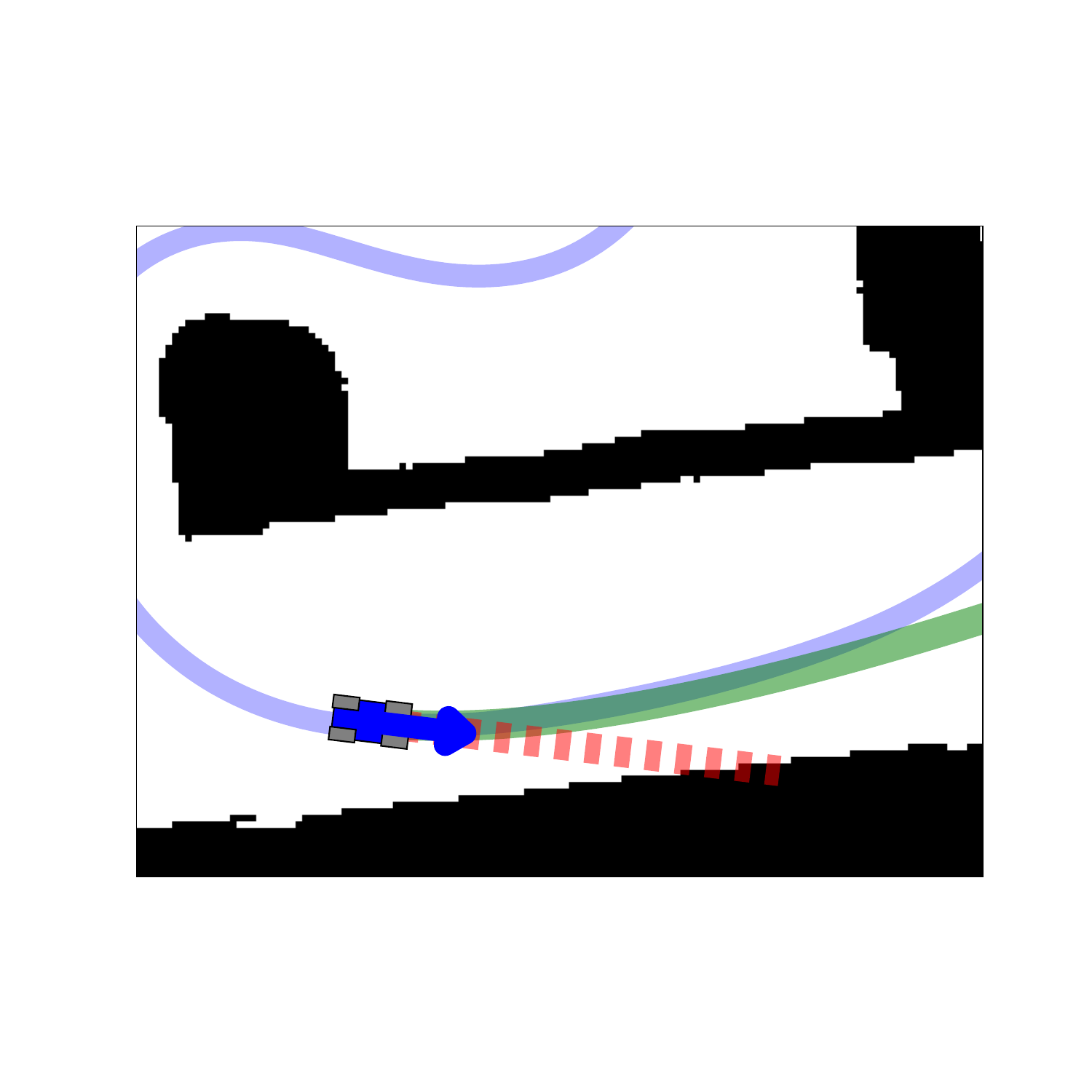}\\
	\includegraphics[trim={1cm 1cm 0.5cm 2.5cm},clip,width=0.85\linewidth]{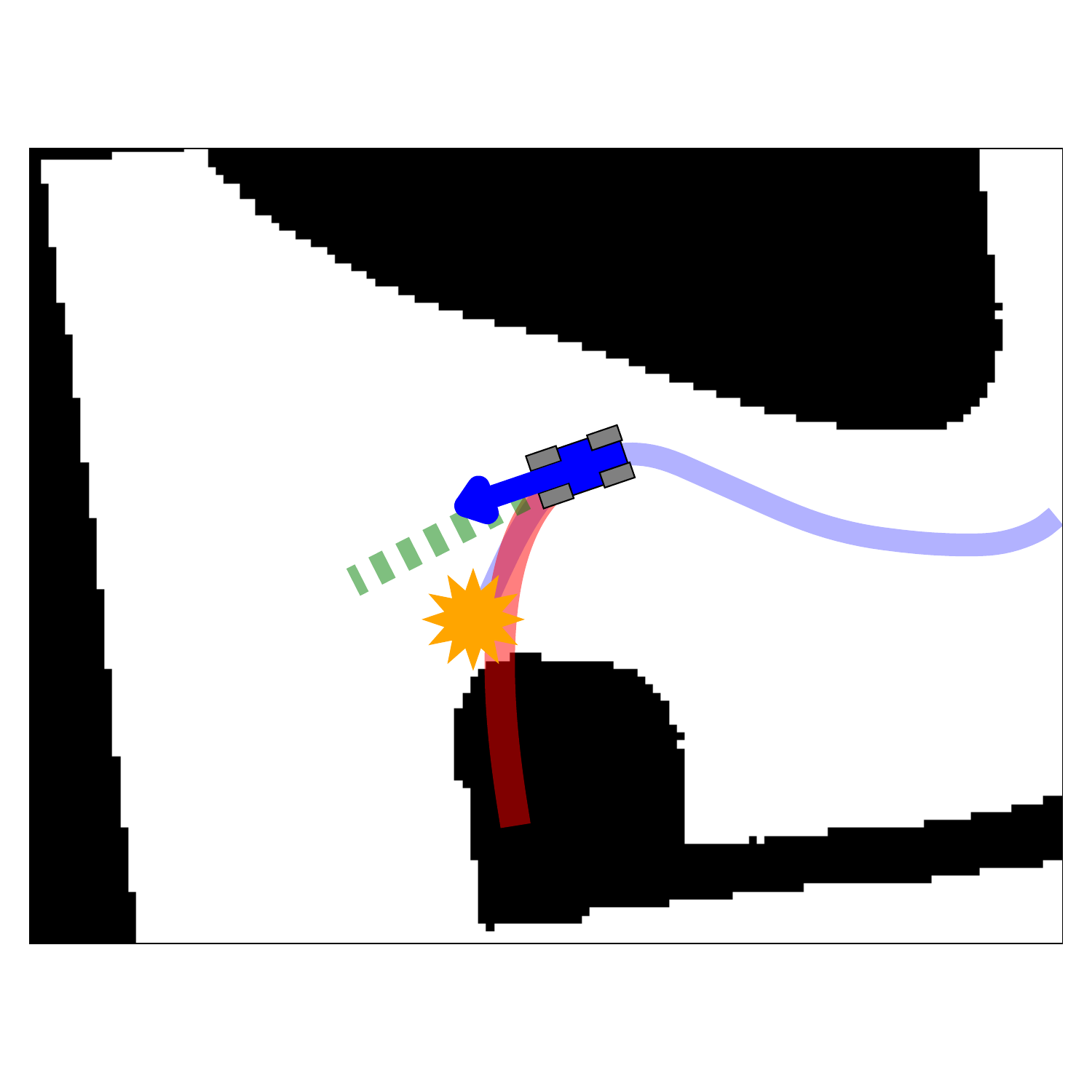}
	\caption{Predictive runtime monitoring using our TPM (solid) and the baseline TTC (dashed). The solid blue line is the ground truth trajectory. Green and red represent predictions of, respectively, safe and unsafe behaviors within the horizon. As can be seen, TMP is more accurate in predicting smooth turns.}
	\label{fig:motivating_example_race_car}
\end{wrapfigure}

\paragraph{Our contributions.} We present {\em Taylor-based Predictive Monitoring (TPM)}, a new framework for predictive runtime monitoring of black-box controlled dynamical systems. For a given time $t$ and past observed states $x_{t-k\tau},\dots,x_{t-\tau},x_t$ evenly sampled with a given time interval $\tau$, our goal is to predict the next $h$ states $x_{t+\tau},\dots, x_{t+h\tau}$. Here, the numbers $k+1$ of past observed states and $h$ of future states to be predicted are parameters that can be chosen by the user. 

TPM consists of a {\em learning} phase followed by a {\em prediction} phase. In the learning phase, TPM first uses Taylor's polynomials in order to approximate the true system dynamics and the controller via a polynomial function. Taylor's polynomial expansion is a classical result in mathematical analysis that allows us to represent an arbitrary nonlinear function around a given point using polynomials with arbitrary accuracy~\citep{rudin1964principles}. The challenge is that the coefficients of Taylor's polynomials use derivatives of the function, which in our setting are unknown owing to the black-box nature of the system. To address this challenge, we then use the backward difference method~\citep{gear1967numerical} to numerically approximate the derivatives from the past state observations. By combining these two ingredients, we obtain an approximate polynomial model of the system in the vicinity of the current time. In the prediction phase, this polynomial model is used to compute the predicted future states $x_{t+\tau},\dots, x_{t+h\tau}$. Finally, the predictions can be used to reason about possible future violations of a safety specification of interest. TPM can reason about both {\em qualitative} and more general {\em quantitative} safety specifications, see Sec.~\ref{prelims} for formal definition.

Taylor's polynomials and backward difference method are both standard in numerical analysis. However, to the best of our knowledge, their combined application to monitoring black-box controlled dynamical systems is novel. We also derive a {\em formal upper bound on the approximation error} induced by TPM. Our formal analysis assumes that the system dynamics and the controller are $(l+1)$-times continuously differentiable functions, where $l$ is the degree of Taylor's polynomial and equals to $k$ in our case. 

We implemented and experimentally evaluated TPM on two complex controlled dynamical systems. 
As a baseline, we compared our method to the time-to-collision (TTC) metric~\citep{vogel2003comparison}, a technique for providing runtime safety assurances of autonomous systems. TTC is routinely deployed in autonomous driving to predict on-road safety violation~\citep{wang2021review}, and is defined as the time after which the vehicle will violate safety (i.e.~cause ``collision'') if it continues with its current velocity. One can view TTC as a special case of our method where only degree~$1$ Taylor's polynomial is used. As shown in Fig.~\ref{fig:motivating_example_race_car}, reasoning about the first-order derivative only can be too conservative and may lead to failures in correctly predicting future safety violations. This is seen in our experiments as well, with TPM showing superior predictive power compared to TTC.


Our contributions can be summarized as follows:
\begin{compactenum}
	\item {\em Predictive runtime monitoring.} We present Taylor-based Predictive Monitoring (TPM), a framework for predictive runtime monitoring of black-box controlled dynamical systems with a given quantitative safety specification. TPM is based on a combination of Taylor's polynomials and the backward difference method of numerical differentiation.
	\item {\em Formal error bound analysis.} We provide a formal error bound analysis for our predictions.
	\item {\em Experiments.} Results demonstrate practical applicability of our framework to complex dynamical systems. TPM shows superior predictive power compared to the baseline TTC, a classical approach for providing runtime assurances about autonomous systems correctness.
\end{compactenum}

\medskip
\noindent\textbf{Related Work.}
In the formal methods literature, traditional runtime verification approaches treat the monitored system as a black-box, and output at each time point whether a given specification has been violated or fulfilled \emph{in the past}~\citep{bartocci2018lectures}.
There are works on predictive monitoring, which assume that some abstract model of the system is either available~\citep{zhang2012runtime,pinisetty2017predictive} or can be learned at runtime~\citep{ferrando2023incrementally}. 
Our work is close to the latter, but, instead of learning a detailed general purpose model of the system, our TPM ``learns'' only the essential trend required to predict the future states in the vicinity of the current time. As discussed in Sec.~\ref{sec:intro}, time-to-collision (TTC) metric~\citep{vogel2003comparison} can be viewed as a special case of our method where only degree~1 Taylor's polynomial is used.

Numerical and data-driven inference algorithms are fundamental to many different disciplines, such as time series forecasting in economics~\citep{hyndman2018forecasting}, state estimation of dynamical systems from observed output sequences~\citep{diop1994interpolation,bunton2024confidently}, and data-driven online control~\citep{de2019formulas}.
While our data-driven prediction algorithm has some resemblance to existing techniques, to the best of our knowledge, our work is the first to apply such techniques to the setting of runtime monitoring of black-box controlled dynamical systems.



\section{Preliminaries and Problem Statement}\label{prelims}

\paragraph{Controlled dynamical systems.}
A {\em controlled dynamical system}, or \emph{system} in short, is defined via
\begin{equation}\label{eq:dynamics}
 \frac{dx(t)}{dt} = f(x(t),u(t)), \quad x(0) = x_0, \quad \forall t\geq 0\;.\; u(t) = \pi(x(t)),
\end{equation}
where $t  \in \mathbb{R}_{\geq 0}$ denotes time, $x(t)\in \mathcal{X} \subseteq \mathbb{R}^n$ and $u(t)\in \mathcal{U} \subseteq \mathbb{R}^m$ denote the state and the control input at time $t$, $x_0\in \mathcal{X}$ is the initial state, $f\colon \mathcal{X} \times \mathcal{U} \rightarrow \mathcal{X}$ is the (nonlinear) dynamics, and $\pi: \mathcal{X} \rightarrow \mathcal{U}$ is the controller which assigns a control input to each state.
In what follows, we assume that the dynamics $f$ and the controller $\pi$ are locally Lipschitz continuous, and the Picard–Lindel\"of theorem guarantees the existence and uniqueness of the solution of \eqref{eq:dynamics}; the solution will be denoted as $\Sys\colon \mathbb{R}_{\geq 0}\to \mathbb{R}^n$ and called the \emph{trajectory} of the system.
Local Lipschitz continuity of dynamics is a standard assumption in control theory~\citep{DawsonGF23}, and is also satisfied by neural network controllers with all common activations, including ReLU, sigmoid, and tanh~\citep{SzegedyZSBEGF13}.


\paragraph{Safety properties.} We consider {\em quantitative safety properties} which are functions of the form $\val\colon \mathcal{X}\to \mathbb{R}$, assigning a real valued \emph{safety level} to each system state. 
We say that the controlled dynamical system under a given controller {\em satisfies the safety property} if $\val(\Sys(t)) \geq 0$ for all time steps $t \in \mathbb{R}_{\geq 0}$, i.e.~if the safety level remains non-negative along the trajectory. For example, if we are interested in analyzing boolean (or {\em qualitative}) safety violations, we define $\val(x) = -1$ if the state $x$ is unsafe and $\val(x) = 0$ otherwise. If we are interested in a quantitative safety properties, like the value of a barrier function $B$~\citep{PrajnaJP07}, then we define $\val(x) = B(x)$ for all states~$x$.
	
\paragraph{Problem statement.} Suppose we are given a \emph{black-box} controlled dynamical system and a quantitative safety property $\val$; both the dynamics and the controller of the system are unknown but we can observe the resulting trajectory $\Sys$. 
Let $\tau\in \mathbb{R}_{\geq 0}$ be a given \emph{sampling time} and $h\in \mathbb{N}$ be a given prediction \emph{horizon}.
A \emph{predictive runtime monitor}, or a \emph{monitor} in short, observes the trajectory of the system at the sampling instances, and after each new observation, predicts the safety levels in the next $h$ sampling instances.
Formally, at each time $t\in \{\tau,2\tau,3\tau,\ldots\}$, the monitor takes the input sequence $\ldots,\Sys(t-2\tau),\Sys(t-\tau),\Sys(t)$ in account and outputs either the sequence $\val(\Sys(t+\tau)), \dots, \val(\Sys(t+h\tau))$ or a statistic thereof (e.g., the minimum $\val$ or the first instance when $\val$ becomes negative).
We consider the problem of designing a monitor for the given safety property.

\paragraph{Taylor's expansion.} Before presenting our monitor, we recall Taylor's polynomial of a $(l+1)$-times continuously differentiable function $g: \mathbb{R} \rightarrow \mathbb{R}$. For each $1 \leq i \leq l$, denote by $g^{(i)}$ the $i$-th derivative of $g$. For a fixed point $t \in \mathbb{R}$, the {\em Taylor's polynomial of $g$ of degree $l$} at point $t$ is 
\begin{equation}\label{equ:taylor's polynomial}
	P_l(s) = g(t) + \frac{g^{(1)}(t)}{1!}(s-t) + \frac{g^{(2)}(t)}{2!}(s-t)^2 + \ldots + \frac{g^{(l)}(t)}{l!}(s-t)^l.
\end{equation}
The following theorem is a classical result from mathematical analysis which provides an upper bound on the approximation error of a function via its Taylor's polynomial at a given point.
\begin{theorem}[Taylor's theorem~\citep{rudin1964principles}]\label{thm:taylor}
	Suppose that $g: \mathbb{R} \rightarrow \mathbb{R}$ is an $(l+1)$-times continuously differentiable function. Let $t \in \mathbb{R}$ and let $P_l$ be the Taylor's polynomial of $g$ of degree $l$ at point $t$. Then, for every $s \in \mathbb{R}$, there exists a point $r \in (t,s)$ such that
	\begin{equation*}
		g(s) - P_l(s) = \frac{g^{(l+1)}(r)}{(l+1)!}(s-t)^{l+1}.
	\end{equation*}
	Hence, if $B\geq \sup_{r\in (t,s)} |g^{(l+1)}(r)|$, then we have $|g(s) - P_l(s)| \leq \frac{B}{(l+1)!}(s-t)^{l+1}$.
\end{theorem}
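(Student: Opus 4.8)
The plan is to establish the Lagrange form of the remainder by reducing it to Rolle's theorem through a well-chosen auxiliary function, after which the stated inequality is immediate. If $s = t$ the claim is vacuous (both sides are zero and the interval between $t$ and $s$ is empty), so assume $s \neq t$. The key device is to hold $s$ fixed and treat the base point as the variable: define
\[
  F(y) \;=\; g(s) \;-\; \sum_{i=0}^{l} \frac{g^{(i)}(y)}{i!}\,(s-y)^i ,
\]
so that $F(s) = 0$ and $F(t) = g(s) - P_l(s)$ is precisely the error we wish to bound. Since $g$ is $(l+1)$-times continuously differentiable, $F$ is continuously differentiable on $\mathbb{R}$.

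First I would compute $F'$. Differentiating each summand by the product rule, the term $\tfrac{g^{(i+1)}(y)}{i!}(s-y)^i$ produced by differentiating $\tfrac{g^{(i)}(y)}{i!}(s-y)^i$ cancels against the term $-\tfrac{g^{(i+1)}(y)}{i!}(s-y)^i$ produced by differentiating the power $(s-y)^{i+1}$ in the next summand, for each $i = 0,\dots,l-1$; the boundary summand at $i=0$ contributes only $g'(y)$, which cancels in the same way. After accounting for the overall minus sign, this standard telescoping collapses to
\[
  F'(y) \;=\; -\,\frac{g^{(l+1)}(y)}{l!}\,(s-y)^l .
\]
Second, I would introduce $G(y) = F(y) - \bigl(\tfrac{s-y}{s-t}\bigr)^{l+1} F(t)$, which is again continuously differentiable and satisfies $G(t) = G(s) = 0$. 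Rolle's theorem, applied on the closed interval with endpoints $t$ and $s$ (the argument is symmetric in their order), yields a point $r$ strictly between $t$ and $s$ with $G'(r) = 0$. Substituting the formula for $F'$ into $G'(r)=0$ and dividing through by $(s-r)^l \neq 0$ gives
\[
  -\,\frac{g^{(l+1)}(r)}{l!} + (l+1)\,\frac{F(t)}{(s-t)^{l+1}} \;=\; 0 ,
\]
hence $F(t) = \tfrac{g^{(l+1)}(r)}{(l+1)!}(s-t)^{l+1}$, i.e. $g(s) - P_l(s) = \tfrac{g^{(l+1)}(r)}{(l+1)!}(s-t)^{l+1}$, which is the claimed identity. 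The final bound follows by taking absolute values and using $|g^{(l+1)}(r)| \le B$.

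I do not expect a genuine obstacle here, as this is a textbook result; the only points demanding care are the bookkeeping in the telescoping computation of $F'$ (in particular the boundary term at $i = 0$ and the overall sign convention), invoking Rolle's theorem on the interval with its endpoints in the possibly reversed order $s < t$, and checking that the point $r$ it supplies is strictly interior so that the division by $(s-r)^l$ is legitimate. An alternative route would write the remainder in integral form via $l$-fold integration by parts (using continuity of $g^{(l+1)}$) and then apply the mean value theorem for integrals, but the Rolle-based argument is shorter and needs no hypotheses beyond those already assumed.
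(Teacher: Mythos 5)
Your proof is correct. Note that the paper does not prove this theorem at all---it is stated as a classical result and cited to Rudin's \emph{Principles of Mathematical Analysis}---so there is no in-paper argument to compare against. Your single application of Rolle's theorem to the auxiliary function $G(y)=F(y)-\bigl(\tfrac{s-y}{s-t}\bigr)^{l+1}F(t)$, with the telescoping computation of $F'$, is one of the two standard textbook routes; Rudin's own proof instead applies Rolle repeatedly to $h(y)=g(y)-P_l(y)-M(y-t)^{l+1}$ after choosing $M$ so that $h(s)=0$, using $h^{(i)}(t)=0$ for $i=0,\dots,l$. Both arguments are equivalent in strength and hypotheses (indeed both need only existence, not continuity, of $g^{(l+1)}$ on the open interval), and your handling of the edge cases (order of $t$ and $s$, strict interiority of $r$, the degenerate case $s=t$) is careful and complete.
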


\section{Algorithms}
\label{sec:algorithms}

The heart of our monitor is a numerical algorithm (Sec.~\ref{sec:prediction}) for predicting the future states of a given black-box system from the states observed so far along the trajectory.
These predicted future states will then be used to obtain the desired predictive runtime monitor (Sec.~\ref{sec:monitoring}).

\subsection{A Numerical Algorithm for Predicting Future States}\label{sec:prediction}

Our (state) prediction algorithm has two phases: (1)~In the {\em learning phase}, for each dimension $i\in [1;n]$ of the system's state space, we use a polynomial with time as its variable to approximate the dimension $i$ of the trajectory $\Sys$. The polynomial approximation function is obtained via a numerical procedure that uses a \emph{finite set} of past states $\Sys(t-k\tau),\ldots,\Sys(t-2\tau),\Sys(t-\tau),\Sys(t)$ with an appropriately large $k$, which we will refer to as the \emph{$\tau$-stencil of length $(k+1)$ ending at  $t$}.
We write $x_{-k} = \Sys(t-k\tau), \dots, x_{-1} = \Sys(t-\tau), x_0 = \Sys(t)$, omitting $t$ whenever it is clear from the context.
(2)~In the {\em prediction phase}, the obtained polynomial approximation functions are used to compute the predictions of the future states up to the horizon $h$, denoted as $x_1 = \Sys(t+\tau), \dots, x_h = \Sys(t+h \tau)$.

\paragraph{Learning phase.} The learning phase independently considers each dimension $i\in [1; n]$ of the system's state space and computes a polynomial approximation for the $i$-th dimension of the trajectory function $\Sys$. Hence, in what follows, without loss of generality we assume that $n = 1$ and present our procedure for computing a polynomial approximation to the real-valued signal $\Sys$.

We use the Taylor's polynomial $P_l$ of $\Sys$ of a given degree $l$ as the polynomial approximation. The challenge in obtaining $P_l$ is that its coefficients depend on the values of the derivatives $\Sys^{(1)}, \dots, \Sys^{(l)}$ at time point $t$, which are unknown to us owing to the black-box nature of the system.

Therefore, we numerically approximate the values of the derivatives using the backward difference~(BD) method~\citep{gear1967numerical} from the observed stencil $x_{-k}, \dots, x_0 $ ending at time $t$.
In particular, the BD approximation of the $i$-th derivative at $x_0$ is obtained as:
\begin{align*}
	\grad{i}{}{x_0} \coloneqq \begin{cases}
		(x_0-x_{-1})/\tau	&	\text{if } i=1,\\
		(\grad{i-1}{}{x_0}-\grad{i-1}{}{x_{-1}})/\tau	&	\text{otherwise}.
	\end{cases}
\end{align*}
The following closed-form expression can be obtained from the inductive definition above:
\begin{align}\label{equ:first order approximation of derivatives}
	\grad{i}{}{x_0} = \frac{\sum_{j=0}^i (-1)^j{i\choose j}x_{-j}}{\tau^i}.
\end{align}
It can be easily verified that $\grad{l}{}{x_0}$ depends on states up to $x_{-l}$ in the past, and therefore the length of the stencil must be $k+1\geq l$.
The approximation error is formally derived in the following lemma.
\begin{lemma}\label{lem:backward difference error}
	Let $i>0$ and suppose that $\Sys: \mathbb{R} \rightarrow \mathbb{R}$ is an $(i+1)$-times continuously differentiable function. Then, for every given stencil of length $k+1 \geq  i$, the following holds:  
	\begin{align}\label{eq:error of backward derivative}
		|\Sys^{(i)}(t)-\grad{i}{}{x_0}| \leq \tau\left|\frac{\Sys^{(i+1)}(t)}{(i+1)!}\sum_{j=0}^{i} (-1)^j{i\choose j}(-j)^{i+1}\right| + \mathcal{O}(\tau^{2}).
	\end{align}
\end{lemma}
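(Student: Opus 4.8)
The plan is to substitute the Taylor expansions of the individual stencil points into the closed-form expression \eqref{equ:first order approximation of derivatives} for $\grad{i}{}{x_0}$ and then collapse the resulting double sum using classical binomial identities. Concretely, since $\Sys$ is $(i+1)$-times continuously differentiable, Taylor's theorem (Theorem~\ref{thm:taylor}) applied to $g = \Sys$ at the point $t$ gives, for each $j \in \{0,\dots,i\}$, a point $r_j$ between $t$ and $t - j\tau$ such that
\[
	\Sys(t-j\tau) = \sum_{m=0}^{i} \frac{\Sys^{(m)}(t)}{m!}(-j\tau)^m + \frac{\Sys^{(i+1)}(r_j)}{(i+1)!}(-j\tau)^{i+1}.
\]
Plugging this into $\grad{i}{}{x_0} = \tau^{-i}\sum_{j=0}^{i}(-1)^j{i\choose j}\, \Sys(t-j\tau)$ and exchanging the order of the two finite sums, the coefficient of $\Sys^{(m)}(t)\,\tau^{m-i}/m!$ becomes the alternating binomial sum $\sum_{j=0}^{i}(-1)^j{i\choose j}(-j)^m$.

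Next I would invoke the finite-difference identity $\sum_{j=0}^{i}(-1)^j{i\choose j}\, q(j) = 0$ for every polynomial $q$ of degree strictly less than $i$, together with $\sum_{j=0}^{i}(-1)^j{i\choose j}\, j^i = (-1)^i\, i!$ (equivalently, the $i$-th finite difference annihilates polynomials of degree $<i$ and sends $j^i$ to $i!$). Since $(-j)^m = (-1)^m j^m$, all terms with $m < i$ vanish, the $m = i$ term contributes exactly $\Sys^{(i)}(t)$, and only the Lagrange remainder survives, giving
\[
	\grad{i}{}{x_0} = \Sys^{(i)}(t) + \frac{\tau}{(i+1)!}\sum_{j=0}^{i}(-1)^j{i\choose j}(-j)^{i+1}\,\Sys^{(i+1)}(r_j).
\]
Replacing each $\Sys^{(i+1)}(r_j)$ by $\Sys^{(i+1)}(t)$ isolates the stated leading error term; the replacement error is controlled because $|r_j - t| \le i\tau$ and $\Sys^{(i+1)}$ is continuous (indeed locally Lipschitz under the smoothness hypotheses in force in the surrounding construction, where $\Sys$ is taken $(l+1)$-times continuously differentiable with $l \ge i$), so $\Sys^{(i+1)}(r_j) - \Sys^{(i+1)}(t) = \mathcal{O}(\tau)$; multiplied by the prefactor $\tau$ this yields the $\mathcal{O}(\tau^2)$ remainder. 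Taking absolute values and applying the triangle inequality then produces exactly \eqref{eq:error of backward derivative}; to make the remainder rate fully explicit one can instead expand one order further in Taylor's theorem and bound $\Sys^{(i+2)}$ on a compact neighbourhood of $t$.

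The main obstacle is the combinatorial step: one must verify that every Taylor coefficient of order strictly below $i$ is annihilated by the alternating binomial sum and that the order-$i$ coefficient collapses to exactly $1$, so that the (scaled) $i$-th backward difference genuinely reproduces $\Sys^{(i)}(t)$ to leading order and the error begins precisely at the $\tau^1$ term with coefficient $\frac{\Sys^{(i+1)}(t)}{(i+1)!}\sum_{j=0}^{i}(-1)^j{i\choose j}(-j)^{i+1}$. A secondary subtlety is bookkeeping on the remainder order: a plain Peano-form remainder under the bare $(i+1)$-times differentiability assumption only gives an $o(\tau)$ tail after division by $\tau^i$, so obtaining the advertised $\mathcal{O}(\tau^2)$ genuinely relies on the local Lipschitz control of $\Sys^{(i+1)}$ (or one extra Taylor order) as described above.
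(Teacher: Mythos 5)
Your proposal is correct and follows essentially the same route as the paper: expand each stencil point $x_{-j}$ in a Taylor series about $t$, substitute into the closed form \eqref{equ:first order approximation of derivatives}, and use the alternating binomial identities to cancel all derivatives of order below $i$ and reproduce $\Sys^{(i)}(t)$ exactly, leaving the $p=i+1$ term as the leading error. Your version is in fact more careful than the paper's, which expands into an \emph{infinite} Taylor series (implicitly assuming more smoothness than the stated $(i+1)$-fold continuous differentiability); your use of the Lagrange remainder and your explicit observation that the $\mathcal{O}(\tau^{2})$ tail requires either Lipschitz control of $\Sys^{(i+1)}$ or one extra order of differentiability correctly pinpoints a regularity subtlety that the paper's own proof glosses over.
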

The approximation $\grad{i}{}{x_0}$ is called the first order approximation, because for small $\tau<1$, asymptotically, the first order term in $\tau$ dominates the error (i.e., the error is $\mathcal{O}(\tau)$).
Higher order BD approximations would lead to smaller errors and will be considered in future works.
\begin{proof}[Proof of Lem.~\ref{lem:backward difference error}]
	In \eqref{equ:first order approximation of derivatives}, if we use the following infinite Taylor's series expansion of $x_{-j}$
	\begin{align*}
		x_{-j} = \Sys(t-j\tau) = \Sys(t) -\frac{j\tau}{1!}\Sys^{(1)}(t) + \frac{(j\tau)^2}{2!}\Sys^{(2)}(t) - \frac{(j\tau)^3}{3!}\Sys^{(3)}(t) + \ldots,
	\end{align*}
	we observe that terms with derivatives of $\Sys$ of order lower than $i$ cancel out, and we obtain:
	\begin{align*}
		\grad{i}{}{x_0} = \Sys^{(i)}(t) + \sum_{p=i+1}^\infty\frac{\tau^{p-i}\Sys^{(p)}(t)}{p!}\sum_{j=0}^{i} (-1)^j{i\choose j}(-j)^{p}. 
	\end{align*}
	The claim is established by separating the dominating term with $p=i+1$ (which results in the first order term in $\tau$) in the sum on the right hand side from the higher order terms.
\end{proof}

We use $\bar{P_l}(\cdot)$ to denote the \emph{approximated} Taylor's polynomial obtained by replacing each $\Sys^{(i)}(t)$ in \eqref{equ:taylor's polynomial} with its BD approximation $\grad{i}{}{x_0}$.

\begin{remark}
	The use of Taylor's polynomial is a design choice, and any other polynomial approximation could be used.
	Since the $l$-th degree polynomial that interpolates between $l+1$ points is unique, all approaches would provide the same answer. 
	It is possible to choose the stencil length larger than $l+1$, in which case the polynomial is no longer unique, and a ``best fit'' polynomial can be obtained, e.g., the one that minimizes the mean-squared error.
	We leave this for future work.
\end{remark}

\paragraph{Prediction phase.} The prediction phase of our monitor uses the approximated Taylor's polynomial $\bar{P_l}$ of the trajectory $\Sys$ around the current time $t$ in order to compute the predicted future states, denoted as $ \bar{x}_1 = \bar{P_l}(t+\tau), \ldots, \bar{x}_h = \bar{P_l}(t+h\tau)$.
The following theorem establishes an error bound.

\begin{theorem}\label{thm:prediction accuracy}
	Let $l>0$ and suppose that $\Sys: \mathbb{R} \rightarrow \mathbb{R}$ is an $(l+1)$-times continuously differentiable function. Let $\bar{P_l}(s)$ be the approximated Taylor's polynomial obtained from a given stencil of length $k+1\geq l$ ending at time $t$ and the given sampling time $\tau$.
	Let $h\in \mathbb{N}$ be a given horizon, and $m\in [1;h]$ be an arbitrary future sampling instance within the horizon.
	Suppose for every $p\in [1,l+1]$, $B_p$ denotes the upper bound on the $p$-th derivative of $\Sys$ in the interval $(t,t+m\tau)$, i.e., $B_p= \sup_{r\in (t,t+m\tau)}\Sys^{(p)}(r)$.
	Then,
	\begin{align*}
		|\Sys(t+m\tau) - \bar{P_l}(t+m\tau)| &\leq  \frac{B_{l+1}}{(l+1)!}(m\tau)^{l+1} + \tau\cdot\sum_{p=1}^l\left|\frac{B_{p+1}}{(p+1)!}\sum_{j=0}^{p} (-1)^j{p\choose j}(-j)^{p+1}\right| + \mathcal{O}(\tau^2)\\
		&= \mathcal{O}((m\tau)^{l+1} + \tau).
	\end{align*}
\end{theorem}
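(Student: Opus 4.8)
The plan is to split the prediction error, via the triangle inequality, into a \emph{truncation error} (from replacing $\Sys$ by its exact degree-$l$ Taylor polynomial at $t$) and a \emph{coefficient error} (from replacing the exact derivatives $\Sys^{(i)}(t)$ by their backward-difference surrogates $\grad{i}{}{x_0}$). Writing $P_l$ for the exact Taylor polynomial of $\Sys$ of degree $l$ at $t$ as in \eqref{equ:taylor's polynomial}, so that $\bar{P_l}$ is obtained from $P_l$ by the substitutions $\Sys^{(i)}(t)\mapsto\grad{i}{}{x_0}$, I would start from
\begin{equation*}
	|\Sys(t+m\tau) - \bar{P_l}(t+m\tau)| \;\le\; |\Sys(t+m\tau) - P_l(t+m\tau)| \;+\; |P_l(t+m\tau) - \bar{P_l}(t+m\tau)|.
\end{equation*}

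For the first summand I would invoke Taylor's theorem (Theorem~\ref{thm:taylor}) with $s=t+m\tau$; since $\Sys$ is $(l+1)$-times continuously differentiable and $B_{l+1}$ dominates $\Sys^{(l+1)}$ on $(t,t+m\tau)$, this gives directly $|\Sys(t+m\tau) - P_l(t+m\tau)| \le \frac{B_{l+1}}{(l+1)!}(m\tau)^{l+1}$, which is exactly the first term in the claimed bound. For the second summand I would use that $P_l$ and $\bar{P_l}$ are polynomials in $(s-t)$ with identical constant term $\Sys(t)=x_0$, whose degree-$i$ coefficients differ by $(\Sys^{(i)}(t)-\grad{i}{}{x_0})/i!$ for $1\le i\le l$. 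Evaluating at $s=t+m\tau$ and using the triangle inequality gives
\begin{equation*}
	|P_l(t+m\tau) - \bar{P_l}(t+m\tau)| \;\le\; \sum_{i=1}^{l}\frac{(m\tau)^i}{i!}\,\bigl|\Sys^{(i)}(t)-\grad{i}{}{x_0}\bigr|,
\end{equation*}
into which I would substitute the per-derivative estimate of Lemma~\ref{lem:backward difference error} for each $i$, replacing $\Sys^{(i+1)}(t)$ by its bound $B_{i+1}$. Isolating the dominant power of $\tau$ in each term and folding everything of higher order into a single $\mathcal{O}(\tau^2)$ remainder yields the middle term $\tau\sum_{p=1}^{l}\bigl|\tfrac{B_{p+1}}{(p+1)!}\sum_{j=0}^{p}(-1)^j\binom{p}{j}(-j)^{p+1}\bigr|$. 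The asymptotic conclusion $\mathcal{O}((m\tau)^{l+1}+\tau)$ then follows since, for fixed $l$ and $m$, the truncation term is $\Theta((m\tau)^{l+1})$ and the coefficient term is $\mathcal{O}(\tau)$.

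The step I expect to be most delicate is the bookkeeping of the $\mathcal{O}(\tau^2)$ remainders in the coefficient error. In particular, for the top-degree term $i=l$ the hypothesis only gives $(l+1)$-fold continuous differentiability, so $\Sys^{(l+2)}$ need not exist and Lemma~\ref{lem:backward difference error}'s remainder must be read as the Lagrange remainder of the \emph{finite} Taylor expansion of each $x_{-j}$ rather than the tail of an infinite series; one then checks it is still uniformly $\mathcal{O}(\tau^2)$ over the stencil window using a bound on $\Sys^{(l+1)}$. A second, minor point is to verify that the single constant $B_p=\sup_{r\in(t,t+m\tau)}\Sys^{(p)}(r)$ legitimately dominates both the supremum appearing in Taylor's remainder and the pointwise value $\Sys^{(p)}(t)$ produced by Lemma~\ref{lem:backward difference error} — which holds by continuity of $\Sys^{(p)}$ at $t$, modulo the harmless convention that these suprema are taken of $|\Sys^{(p)}|$.
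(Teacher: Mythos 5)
Your proposal is correct and matches the paper's approach exactly: the paper's entire proof is the one-line remark that the result ``follows by combining Thm.~\ref{thm:taylor} and Lem.~\ref{lem:backward difference error},'' and your triangle-inequality split into a Taylor truncation term and a backward-difference coefficient term is precisely that combination, spelled out in full. The subtleties you flag — the factors $\frac{(m\tau)^p}{p!}$ that actually multiply each coefficient-error term before one can fold them into the stated bound, and the finite-Taylor (Lagrange-remainder) reading of Lem.~\ref{lem:backward difference error} at $i=l$ where only $(l+1)$-fold differentiability is available — are genuine but are glossed over by the paper itself.
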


\begin{proof}
	Follows by combining Thm.~\ref{thm:taylor} and Lem.~\ref{lem:backward difference error}.
\end{proof}

Thm.~\ref{thm:prediction accuracy} suggests that, for small $m$ and for $\tau<1$, the prediction error is linear in $\tau$, i.e., $\mathcal{O}(\tau)$.
However, for long prediction horizons $h$ that allow $m \in [1;h]$ to become too large, the term $(m\tau)^{l+1}$ may dominate over $\tau$, and therefore the prediction error may increase to $\mathcal{O}((m\tau)^{l+1})$. Hence, considering larger prediction horizon $h$ requires using smaller sampling time $\tau$.
This trend is visible in the ablation tests that we present in the experiments section. Finally, we remark that the $(l+1)$-times differentiability assumption is necessary only for our error bound analysis in Thm.~\ref{thm:prediction accuracy}. However, our numerical algorithm for predicting future states remains well defined even without this assumption.

\subsection{Taylor-Based Predictive Monitoring (TPM)}\label{sec:monitoring}

	\setlength{\algomargin}{1.5em}
	\newcommand\mycommfont[1]{\footnotesize\ttfamily\textcolor{black!70!white}{#1}}
	\SetCommentSty{mycommfont}
\begin{algorithm2e}[t]
	\DontPrintSemicolon
	\LinesNumbered
	\SetAlgoLined
	\SetKwInOut{Parameter}{Input parameters}
	\KwIn{sampling time $\tau$, Taylor polynomial's degree $l$, horizon $h$, safety property $\val$}
		$Q\gets $ empty FIFO queue \tcp*[r]{will store the latest $l+1$ sampled states}
		\While{true}{
			$x \gets$ newly observed state \;
			$Q.\text{push}(x)$ \tcp*[r]{$x$ is added as the last element of $Q$}
				\uIf{$Q.\text{size}()> l+1$}{ 
					$Q.\text{pop}$ \tcp*[r]{discard the oldest state to maintain $Q.\text{size}()=l+1$}
				$\est{x}{1},\ldots,\est{x}{h}\gets$ future states predicted from the stencil $Q[0],\ldots,Q[l]$ \label{line:state prediction} \tcp*[r]{Sec.~\ref{sec:prediction}} 
				\textbf{Output} $\{\val(\est{x}{1}),\ldots,\val(\est{x}{h})\}$  \emph{or} a desired statistic thereof \label{line:spec prediction} \tcp*[r]{e.g., the minimum}
			}
		}
		\caption{\textbf{T}aylor-Based \textbf{P}redictive \textbf{M}onitor (TPM)}
		\label{alg:monitor}
	\end{algorithm2e}

We now present TPM, our predictive runtime monitor for safety properties, based on the numerical state prediction algorithm in Sec.~\ref{sec:prediction}; the monitor's pseudocode is presented in Alg.~\ref{alg:monitor}. 
The monitor continuously observes the samples drawn from the system's trajectory and stores the latest $l+1$ samples in the FIFO queue $Q$.
Using the states stored in $Q$ as a stencil, TPM first computes the predicted future states using the approximate Taylor's polynomial (Line~\ref{line:state prediction}).
Afterwards, in Line~\ref{line:spec prediction}, it outputs the safety levels of the predicted states or a desired statistic thereof, like the minimum safety level in $h$ steps or the first time instance when safety is violated (i.e., $\val$ drops below zero).

\section{Experimental Evaluation}
\label{sec:experiments}
We implemented the algorithms from Sec.~\ref{sec:algorithms} in a prototype tool written in Python, and performed experiments to investigate the following two research questions: (i)~How does TPM compare to state-of-the-art TTC method in giving early warnings of safety violations?
(ii)~How accurate is TPM when compared against the ground truth data?
Both questions are studied on two environments.

\noindent\emph{Environment 1: F1Tenth Racing~\citep{okelly2020f1tenth}.} A racing car of $1\!:\!10$ scale needs to drive around a track while avoiding getting too close to the track boundaries. The state vector comprises of the X-Y coordinate, the rotation, the forward velocity, and the angular velocity. The control inputs are the steering angle and the throttle. A state is safe if its distance to the track boundaries is greater than a predefined threshold, set to \SI{0.5}{\meter} meters in our experiments.
We consider $70$ differently parameterized controllers~\citep{kresse2024deep}, of which $54$ are the so-called Pure Pursuit~(PP) controllers which track a pre-planned path~\citep{coulter1992implementation} and 16 are Follow-The-Gap~(FTG) controllers which steer towards the direction where there is the most free space~\citep{sezer2012novel}.
The sampling time for this environment is fixed at $\tau = \SI{0.01}{\second}$.

\noindent\emph{Environment 2: F-16 Fighter Jet~\citep{DBLP:conf/adhs/HeidlaufCBB18}.} A simplified F-16 fighter jet system needs to fly at a safe height above the ground. The $16$-dimensional state vector comprises of air speed ($v_a$), angle of attack ($\alpha$), angle of sideslip ($\beta$), roll ($\phi$), pitch, yaw, roll rate, pitch rate, yaw rate, northward displacement, eastward displacement, altitude ($\mathit{alt}$), engine power lag, upward accel, stability roll rate, and slide accel and yaw rate. 
The $4$ control inputs are acceleration, stability roll rate, the sum of side acceleration and yaw rate, and throttle.
A state is safe if the altitude is between $1000\;\ft$ and $45000\;\ft$.
The sampling time for this environment is fixed at $\tau = \SI{0.033}{\second}$.


\subsection{Experiment 1: Comparison to the Classical Time-to-Collision Method}

We compared the performance of TPM to the baseline time-to-collision (TTC)~\citep{vogel2003comparison}, a widely used measure in autonomous driving for predicting on-road safety violations~\citep{wang2021review}. The TTC metric represents a \emph{special case} of our approach with $l=1$, utilizing only the current velocity to estimate the time to collision, assuming the vehicle continues in a straight line along its current orientation.

We design TPM monitors with two different types of outputs, namely boolean safety outputs and quantitative safety outputs, each of which can also be predicted by modifying the TTC algorithm.
In the following, we describe the two types of outputs along with ways to measure their accuracy, for which we consider the actual system's trajectory as the ground truth.

\begin{wrapfigure}{r}{0.7\linewidth}
	\centering\scalebox{0.7}{
		\includegraphics[width=1.0\textwidth]{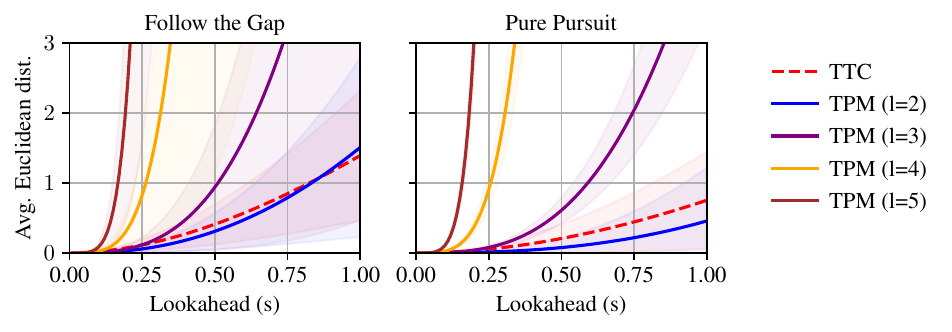}}
		\scalebox{0.7}{
		\includegraphics[width=1.0\textwidth]{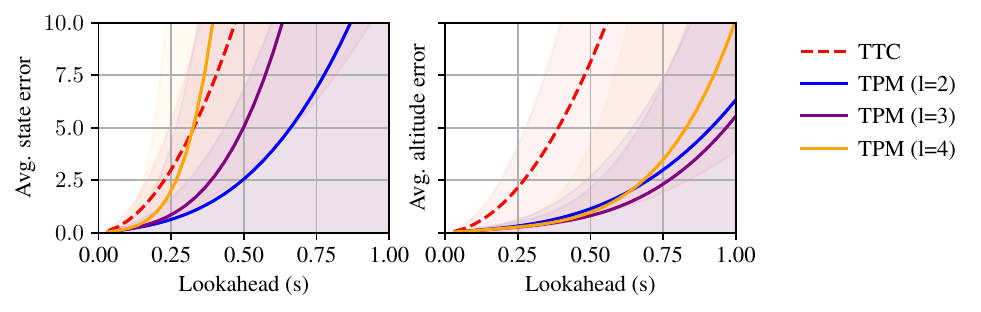}}\vspace{-1.2em}
	\caption{Ablation test results for prediction errors on F1Tenth (top row) and F-16 (bottom row). The lines represent the mean whereas the shaded regions represent the spread. For TPM, the constant $l$ is the degree of the Taylor's polynomial. The lookahead (X-axes) are measured as $h\tau$ for varying $h$. }\vspace{-1em}
	\label{fig:f1tenth mse2}
\end{wrapfigure}
\medskip
\noindent\textbf{Outputs with boolean accuracy metrics.} In this case, the monitor raises a warning if a safety violation is predicted within the prediction horizon. We measure the accuracy as follows: A warning is classified as a \emph{true positive} (TP) if it is issued prior to an unsafe state; otherwise, it is a \emph{false positive} (FP), indicating a false alarm. Conversely, if there was no prior warning but a unsafe state occurs, it is categorized as a \emph{false negative} (FN), representing a missed detection; otherwise, it is called a \emph{true negative} (TN). The true positive rate (aka, sensitivity) is defined as TPR = TP/(TP+FN), and the true negative rate (aka, specificity) is defined as TNR = TN/(TN+FP). 
TPR indicates how well the monitor can predict a real safety violation, whereas TNR indicates how well a monitor can predict the absence of it.

\medskip
\noindent\textbf{Outputs with quantitative accuracy metrics (Q).} In this case, the monitors' outputs are environment-specific. 
In the F1Tenth environment, the output of the monitor is the same as before, i.e., it raises a warning if an unsafe state is predicted within the prediction horizon. We measure accuracy as the earliest time before entering an unsafe state when a warning is issued: $Q_\textsc{F1Tenth}=(\min_{i\geq t_\textsc{unsafe}-1}t_\textsc{warning}^i)-t_\textsc{unsafe}$, where $t_\textsc{unsafe}$ is the time when the unsafe state takes place, $t_\textsc{warning}^i$ is the time $i$ when a warning is issued within the prediction horizon $h$.
For the F-16 environment, the monitor is required to output the \emph{minimum safety distance} within the horizon, defined as $d=\min_{i\in [0,h]} |(\est{x}{i}-\mathit{alt}_{min}) + (\mathit{alt}_{max} -\est{x}{i})|$. We measure accuracy as the difference between the minimum safety distance predicted by the monitor and the ground truth minimum safety distance observed: $Q_\textsc{F-16}=|d_\textsc{predicted}-d_\textsc{observed}|$, for a fixed prediction horizon set to $50$.


\medskip

The accuracy of the TPM and TTC monitors were measured on random simulations with random initial states of the two systems that we consider: for F1Tenth, we collected $775,300$ simulation steps with FTG controllers and  $2,647,570$ with PP controllers, and for F-16, we collected $33,750,000$ simulation steps.

The results are reported in Tab.~\ref{tab:performance_comparison1}.
We observe that, overall, TPM significantly outperforms TTC in all categories except F1Tenth with FTG controller, in which case, owing to the less smooth trajectories, TPM showed lower TNR, i.e., it was more ``cautious'' and more often predicted safety would be violated when in reality it did not. 
In all other cases, TPM was more accurate and showed higher TPR, TNR, and $Q$-value than TTC.


These findings are further confirmed by our experiments shown in Fig.~\ref{fig:f1tenth mse2}. Here, TPM significantly outperforms TTC in the F-16 environment as well as for the PP controllers in the F1Tenth environment. For FTG controllers, the TPM method performs better with a smaller lookahead, up to approximately 0.8 seconds (i.e., $h=80$).


\begin{table}
	\centering
	\caption{Performance comparisons between TPM and TTC on the F1Tenth and F-16 environments. The ``$\%$'' in bracket in TP, FP, and FN are with respect to the total simulation step counts. The bold numerical entries indicate the which method among TPM and TTC was better.}
\label{tab:performance_comparison1}\scalebox{0.95}{
	\begin{tabular}{lcccccc}
		\toprule
		Env. & $h$ & Metric & \multicolumn{2}{c}{\textbf{TPM}} & \multicolumn{2}{c}{TTC} \\
		\midrule
		\parbox[t]{2mm}{\multirow{13}{*}{\rotatebox[origin=c]{90}{F1Tenth}}} & & & FTG & PP & FTG & PP \\
		\cmidrule(lr){4-5} \cmidrule(lr){6-7}
		 & \multirow{4}{*}{50} & TP & \textbf{98323 (12.68\%)} & \textbf{246143 (9.30\%)} & 80023 (10.32\%) & 235180 (8.88\%) \\
		& & FP & 36329 (4.69\%) & \textbf{33049 (1.25\%)} & \textbf{12371 (1.60\%)} & 100915 (3.81\%) \\
		& & FN & \textbf{711 (0.09\%)} & \textbf{269 (0.01\%)} & 19011 (2.45\%) & 11232 (0.42\%) \\
		& & TPR & \textbf{0.993} & \textbf{0.999} &0.808&  0.954 \\
		& & TNR & 0.946 & \textbf{0.986} &\textbf{0.982} & 0.958 \\
		& & $Q$  & \textbf{47.28} & \textbf{49.69} & 39.22 & 46.61 \\
		\cmidrule(lr){2-7}
		 & \multirow{4}{*}{100} & TP & \textbf{98736 (12.74\%)} & \textbf{244257 (9.23\%)}  & 90232 (11.64\%) & 214305 (8.09\%) \\
		& & FP & 177354 (22.88\%) & \textbf{238744 (9.02\%)}  & \textbf{133671 (17.24\%)}  & 566401 (21.39\%) \\
		& & FN & \textbf{298 (0.04\%)}  & \textbf{2155 (0.08\%)}  & 8802 (1.14\%) & 32107 (1.21\%) \\
		& & TPR & \textbf{0.997} &\textbf{0.991} &0.911& 0.870\\
		& & TNR & 0.738 & \textbf{0.901} &\textbf{0.802} & 0.764\\
		& & $Q$ & \textbf{98.09} & \textbf{96.67}  & 90.93 & 84.38 \\
		\midrule
		\parbox[t]{2mm}{\multirow{6}{*}{\rotatebox[origin=c]{90}{F-16}}} & \multirow{6}{*}{50} & TP & \multicolumn{2}{c}{\textbf{2706320 (8.01\%)}}  & \multicolumn{2}{c}{2518888 (7.46\%)} \\
		& & FP & \multicolumn{2}{c}{\textbf{20309 (0.06\%)}}  & \multicolumn{2}{c}{298477 (0.88\%)} \\
		& & FN & \multicolumn{2}{c}{\textbf{0 (0.00\%)}}  & \multicolumn{2}{c}{\textbf{0 (0.00\%)}} \\
		& & TPR & \multicolumn{2}{c}{\textbf{1.00}} &\multicolumn{2}{c}{\textbf{1.00}}\\
		& & TNR &  \multicolumn{2}{c}{\textbf{0.989}} &\multicolumn{2}{c}{0.86}\\
		& & $Q$ & \multicolumn{2}{c}{\textbf{72.01 ft}}  & \multicolumn{2}{c}{108.94 ft} \\
		\bottomrule
	\end{tabular}}
\end{table}


\subsection{Experiment 2: Prediction Accuracy and Ablation Tests}
To visually inspect the prediction accuracy of TPM, in Fig.~\ref{fig:f16stateestimation}, we plot the outputs of several instances of our monitor, with different horizon lengths, alongside the actual trajectory.
We observe that the prediction error increases with longer prediction horizons. 
To further analyze the relationship between prediction error, Taylor polynomial's degree $l$, and prediction horizon $h$, we conduct ablation tests whose results are shown in Fig.~\ref{fig:f1tenth mse2}.
We observe that for F1Tenth, the configuration with $l=2$ outperforms the rest, for both PP and FTG agents, with the improvement being particularly significant for PP agents.
As the prediction horizon increases beyond 0.8s (80 prediction steps), the configuration with $l=1$ becomes slightly better for the FTG agents. In the F-16 environment, we observe that $l=3$ yields the best performance.

Overall, prediction accuracy is correlated with the smoothness of the dynamical system and  the used controller. This is not surprising since from Thm.~\ref{thm:prediction accuracy}, we know that the prediction error increases as the values of the higher order derivatives increase, e.g., when the system has jerky movements.
\begin{wrapfigure}{r}{0.7\linewidth}
\centering
\scalebox{0.7}{
		\includegraphics[width=\textwidth]{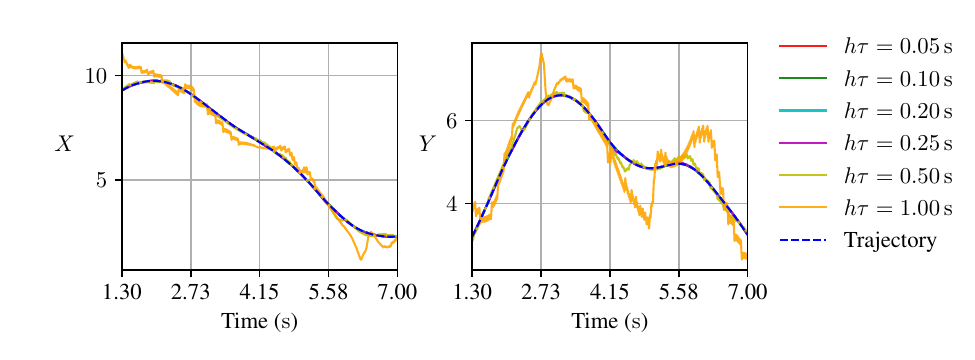}}\\ 
\scalebox{0.7}{
		\includegraphics[width=\textwidth,trim={0 0 0 1.2cm},clip]{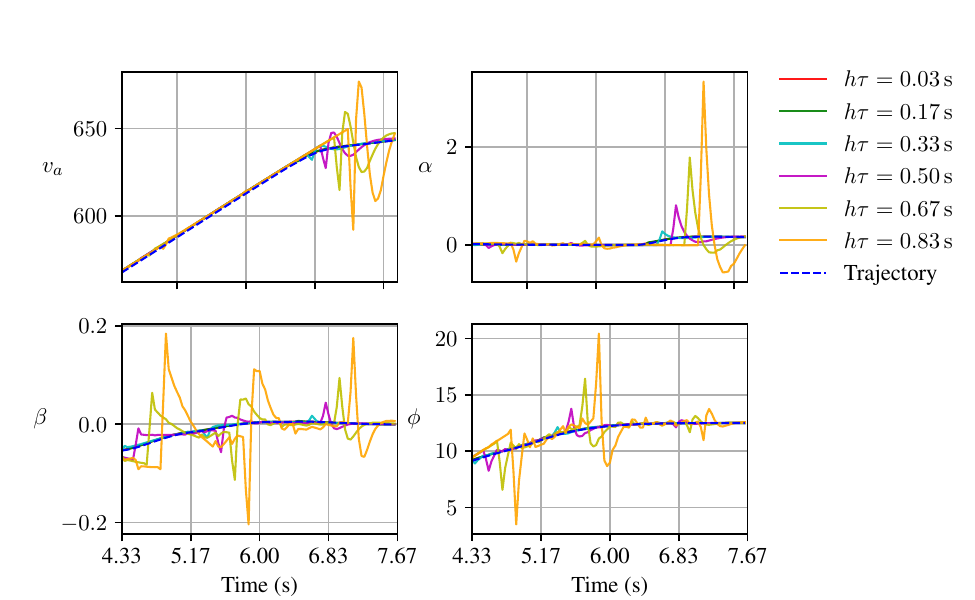}}
	\caption{Visualization of the monitor's output as compared to the ground truth trajectories for the first two state dimensions of F1Tenth with $l=2$ (top row) and four different state dimensions of F-16 with $l=3$ (middle and bottom rows). 
		\label{fig:f16stateestimation}}\vspace{-1em}
	\end{wrapfigure}	

%

\medskip
\noindent\emph{Monitoring overhead.} 
Our experiments were ran on a personal computer with 12th Gen Intel(R) Core(TM) i9-12900K processor and 32GB RAM.
In our experiments, the monitoring overhead per observation consistently remains below {0.001 seconds} for $l\leq 14$ and $h=1$, or for $l\leq 5$ and $h\leq 10$. For longer prediction horizons, the overhead incrementally increases to approximately 0.002 seconds. These results demonstrate the efficiency of our monitor, confirming its lightweight nature and suitability for practical applicability. 

\section{Conclusion}
We introduced a lightweight \emph{predictive} runtime monitoring framework for black-box controlled dynamical systems, which is able to predict safety violations ahead in time.
At each time step, our monitor learns a Taylor-based polynomial approximation of the system's state trajectory from the past observations, which is then used to perform predictions of future states so that safety violations can be predicted. We derive formal upper bounds on the prediction error, given the knowledge of bounds on the derivatives of the trajectory. We present the effectiveness of our monitor on models of a racing car and a fighter aircraft taken from the literature.
Future work will focus on studying numerical instabilities, higher order numerical approximations of the derivatives in Taylor's polynomial, different forms of polynomial approximations, as well as extensions to stochastic dynamical systems, multi-agent scenarios, and broader applications beyond safety verification.


\clearpage
\acks{This work was supported in part by the ERC project ERC-2020-AdG 101020093.}

\bibliography{refs}

\end{document}